\newenvironment{proofp}{\emph{Proof of Proposition 1:}}{\hfill$\square$}
\newenvironment{proofpp}{\emph{Proof of Proposition 2:}}{\hfill$\square$}
\newcommand{\para}[1]{\medskip \noindent \textbf{{#1}.}}
\definecolor{orange(sae/ece)}{rgb}{1.0, 0.49, 0.0}
\definecolor{teal(sae/ece)}{rgb}{0, 0.47, 0.52}
\definecolor{purple}{rgb}{0.74, 0.65, 1.0}
\definecolor{light_gray}{rgb}{0.9, 0.9, 0.9}
\definecolor{medium_gray}{rgb}{0.6, 0.6, 0.6} 
\definecolor{dark_gray}{rgb}{0.2, 0.2, 0.2} 
\definecolor{dark_blue}{rgb}{0.098, 0.239, 0.52}
\definecolor{dark_brown}{rgb}{0.3255, 0.004, 0.001}
\definecolor{r3mcolor}{rgb}{0.478, 0.1569, 0.4863}
\definecolor{light_blue}{rgb}{0.33, 0.80, 1}
\newcommand{\ours}{{{Algorithm 1}}\xspace}
\newcommand{\passive}{{{passive}}\xspace}
\newcommand{\oracle}{{{complete information}}\xspace}
\newcommand{\revise}[1]{\color{black}{#1}}
\newcommand{\finalrevise}[1]{\color{black}{#1}}
\newcommand{\T}{\mathbf{T}}
\newcommand{\N}{\mathbf{N}}
\newcommand{\est}{\hat{\theta}}
\newcommand{\param}{\theta}
\newcommand{\stepsize}{\alpha}
\newcommand{\DemoCost}{c^{\mathrm{demo}}}
\newcommand{\TeachingCost}{\bar{c}}
\newcommand{\TeachingPolicy}{\bar{\pi}_t^1}
\newcommand{\TeachingPolicyUnLabeled}{\bar{\pi}}
\newcommand{\policy}{\pi}
\newcommand{\trueparam}{\theta^*}
\newcommand{\intentweight}{\rho_2}
\newcommand{\taskweight}{\rho_1}
  \newtheorem{theorem}{Theorem}
  \newtheorem{proposition}[theorem]{Proposition}
  \newtheorem{remark}[theorem]{Remark}
\begin{document}

\title{
Intent Demonstration in General-Sum Dynamic Games
\\via Iterative Linear-Quadratic Approximations}

\author{
Jingqi Li, Anand Siththaranjan, Somayeh Sojoudi, Claire Tomlin, Andrea Bajcsy
\thanks{Jingqi Li, Anand Siththaranjan, Somayeh Sojoudi and Claire Tomlin are with the Department of Electrical Engineering and Computer Sciences, University of Berkeley, CA, 94704, USA (email: \href{mailto:jingqili@berkeley.edu}{jingqili@berkeley.edu}, \href{mailto:anandsranjan@berkeley.edu}{anandsranjan@berkeley.edu}, \href{mailto:sojoudi@berkeley.edu}{sojoudi@berkeley.edu}, \href{mailto:tomlin@berkeley.edu}{tomlin@berkeley.edu}). Andrea Bajcsy is with the School of Computer Science, Carnegie Mellon University, PA, 15289, USA (email: \href{mailto:abajcsy@cmu.edu}{abajcsy@cmu.edu}). Corresponding author: Jingqi Li.}
\thanks{This work was supported by DARPA under the Assured Autonomy (grant
FA8750-18-C-0101) and ANSR programs (grant FA8750-23-C-0080), the
NASA ULI program in Safe Aviation Autonomy (grant 62508787-176172),
and the ONR Basic Research Challenge in Multibody Control Systems (grant
N00014-18-1-2214). This work was also supported by U.S. Army Research
Laboratory and Research Office (grant W911NF2010219).
}
}

\markboth{Journal of \LaTeX\ Class Files,~Vol.~14, No.~8, August~2021}%
{Shell \MakeLowercase{\textit{et al.}}: A Sample Article Using IEEEtran.cls for IEEE Journals}


\maketitle
\begin{abstract}%
    Autonomous agents should coordinate effectively without prior knowledge of others’ intents. While prior work has focused on intent inference, we address the \emph{inverse} problem: how agents can \emph{strategically} demonstrate their intents within general-sum dynamic games. {\finalrevise We model this problem and propose an algorithm that balances intent demonstration with task performance. To handle nonlinear dynamic games with continuous state–action spaces, our method leverages iterative linear–quadratic game approximations and provides efficient intent-teaching guarantees: the uncertain agent’s belief can be driven rapidly to the ground truth, while the demonstrating agent avoids expending effort on unnecessary belief alignment when it does not improve task performance. Theoretical analysis and hardware experiments confirm that our approach enables the demonstrating agent to reconcile task execution with belief alignment and strategically manage the information asymmetry among agents, even as its intent evolves during deployment.}
\end{abstract}

\begin{keywords}%
  General-sum dynamic games, incomplete information games, multi-agent systems
\end{keywords}

\section{Introduction}


{\revise{}General-sum dynamic games---wherein agents may have competing (but not opposing) objectives---are a powerful mathematical framework that can model a range of multi-agent behaviors, such as autonomous vehicle coordination \cite{schwarting2019social} and human-robot interaction \cite{music2020haptic}. 
When these models are put into practice
, an outstanding challenge is accounting for the fact that all agents' objectives (i.e., intents) may not be known \textit{a priori}. 
For example, when a car is merging onto the highway, the highway drivers typically pay attention to see if the new car is aggressively merging in front of them, or passively yielding to them.}

{\finalrevise Prior game-theoretic planners largely address intent uncertainty from the perspective of agents uncertain about others’ behavior, which we call \textit{uncertain agents}. These works typically assume the uncertain agent either acts under point estimates of others’ intents \cite{schwarting2019social,mehr2023maximum} or plans in expectation over a distribution of opponent strategies (e.g., aggressive vs. passive merging drivers) \cite{laine2021multi,le2021lucidgames}. Other approaches let the uncertain agent take information-gathering actions to probe the opponent’s intent \cite{sadigh2016information,hu2022active,yu2023active}, improving long-term performance. However, these models overlook the complementary perspective: the agent with certainty, referred to as the \textit{certain agent}, can also \textit{demonstrate} its intent. For instance, a merging driver may accelerate more aggressively to signal intent to surrounding vehicles. Our key insight is that \emph{a certain agent can intentionally shape uncertain agents’ beliefs through its actions, strategically managing information asymmetry to enhance overall task performance.}}

{\revise{}
In this work, we study \emph{strategic} intent demonstration in dynamic games, where a certain agent interacts with multiple uncertain agents. Our core idea is to model the certain agent as planning over both the evolution of the joint physical state and the dynamics of the uncertain agents’ beliefs. With this, we can design objectives enabling the certain agent to trade off between \emph{demonstrating its intent} (i.e., aligning the uncertain agents’ beliefs with its true intent) and \emph{pursuing its own task performance}, while the uncertain agents respond through belief updates and the rational physical actions.

{\finalrevise Our primary contribution is a scalable continuous state-action algorithm for solving nonlinear intent demonstration games by iteratively approximating it with local linear-quadratic (LQ) games, i.e., games with linear dynamics and quadratic objectives.} Our algorithm consists of two sub-optimizations: first solving for all agents' game-theoretic feedback policies parameterized by \textit{any} intent, and then solving the certain agent's optimization over the joint physical and estimate dynamics. We theoretically characterize the convergence of the uncertain agents' beliefs and the certain agent’s ability to balance intent demonstration with task performance. We also evaluate our method in a suite of multi-agent settings such as decentralized bi-manual robot manipulation, three-vehicle platooning, and shared control. We find that when agents can strategically demonstrate their (dynamically changing) intents to others, they can achieve superior task performance and coordination.







\section{Related Works}\vspace{-0.5em}


{\revise{}\para{Efficient Solutions to General-Sum Dynamic Games} Even without intent uncertainty, solving general-sum dynamic games over continuous state and action spaces is challenging. 
Most of these games have no analytic solution, and classical dynamic programming approach for finding Nash equilibria of these games suffers from the ``curse of dimensionality'' \cite{powell2007approximate}. 
However, under linear dynamics and quadratic costs, there exist efficient numerical solutions for solving these linear-quadratic (LQ) games \cite{bacsar1998dynamic}. Recent works propose to solve nonlinear games by iteratively approximating them via LQ games \cite{fridovich2020efficient}. 
In this work, we leverage these fast and approximate iterative LQ game solvers as a submodule in our intent demonstration algorithm.}

\para{Incomplete Information Games: From Theory to Algorithms} 
Prior dynamic programming solutions to incomplete information games \cite{harsanyi1968games,ouyang2016dynamic, vasal2018systematic, huang2019dynamic, sagheb2023should} do not scale to high-dimensional nonlinear games with continuous state, action and intent spaces. Thus, 
recent works focus on scalable approximations. 
One overarching approximation is assuming that some agents have complete information and others do inference. These approaches model the uncertain agents as planning in expectation \cite{schwarting2021stochastic, laine2021multi}, planning with the most likely estimate and recovering a complete-information game \cite{le2021lucidgames,chahine2023intention}, doing intent inference from an offline dataset \cite{peters2021inferring,li2023cost, mehr2023maximum}, planning multiple contingencies based on discrete intent hypotheses \cite{peters2023contingency}, and modeling incentives for uncertain agents to take information-gathering actions \cite{sadigh2016information, hu2022active}. While prior works focus on how \textit{uncertain agents} should tractably plan under their beliefs, we focus on how the \textit{certain agent} can demonstrate their intent by exploiting the learning dynamics of other agents. 

{\revise{}\para{Intent Demonstration in Multi-Agent Interactions} Prior works on intent demonstration, such as legibility in robot motion planning around humans \cite{dragan2013legibility}, typically model uncertain agents as passive observers. However, in scenarios like multi-agent highway driving \cite{xing2019driver} or collaborative manipulation \cite{losey2018review}, all agents actively interact while some simultaneously learn the missing information of the games. Unlike previous multi-agent intent demonstration frameworks \cite{mavrogiannis2018social,bastarache2023legible}, our model explicitly accounts for rational feedback from uncertain agents within general-sum dynamic games. Moreover, rather than simply aligning uncertain agents’ beliefs with the certain agent’s true intent, our approach allows the certain agent to strategically shape these beliefs, thereby guiding uncertain agents’ actions to enhance overall task~performance beyond conventional~belief-alignment methods~\cite{dragan2013legibility,le2021lucidgames}. }

\section{Background: General-sum Games and Nash Equilibrium}
\label{sec:background}
In this section, we present the necessary background on general-sum dynamic games. For narrative simplicity, we will use the terms ``players'' and ``agents'' interchangeably. 

\para{Notation} We consider general-sum games played over the finite time horizon $T$.
We consider $N$ players in the game, each of whose control action is denoted by 
$u_t^i \in \mathbb{R}^m$ for $i\in\{1, 2, \dots,N\}$. 
Let the set of times $\{0,1,\hdots,T\}$ be denoted by $\T$ and the set of player indicies $\{1, 2, \dots,N\}$ be denoted by $\N$. 
We denote $x_t\in\mathbb{R}^n$ to be the joint physical states of all players (e.g., positions, velocities) which evolves via the deterministic discrete-time  dynamics, $x_{t+1}=f_t(x_t,u_t^1,\dots,u_t^N)~\forall t\in \T,$
where $f_t(\cdot): \mathbb{R}^n\times \mathbb{R}^{m}\times \dots\times \mathbb{R}^m\to \mathbb{R}^n$ is assumed to be a differentiable function.  
For notational convenience, we denote the vector of all $N$ agents' actions at time $t$ to be $u_t:=[u_t^1,\dots,u_t^N]$. 

\para{Player Objectives} Let each player $i\in\N$ seek to minimize their own cost function, $c_t^i(x_t, u_t)$. Note that in general this cost function depends on \textit{both} the joint physical state of all players and also the actions of all players.
It is precisely this coupling that induces a dynamic game between all players. 
The Nash equilibrium defines a scenario wherein no player wants to deviate from their current state-action profile under their respective cost functions. 
Specifically, in our work, we consider \textit{feedback Nash equilibrium} (FNE) \cite{bacsar1998dynamic}, wherein each player $i\in\N$ solves for a policy $\pi_t^i(x_t):\mathbb{R}^n \to \mathbb{R}^m$ which gets access to the current joint physical state, $x_t$, at any time, and outputs an action.
When the cost functions for all agents were assumed to be known a priori, such games are called \emph{complete information games}. 
However, when players have uncertainty over other players' objectives, these are \emph{incomplete information games}, which is what we study here.

\section{Problem Formulation: Intent Demonstration in General-Sum Dynamic Games}

In this work, we study the problem of intent demonstration---wherein one agent can express their intent to uncertain agents---in general-sum game-theoretic interactions. 
Similar to prior work \cite{chahine2023intention,le2021lucidgames}, we consider incomplete information \textit{asymmetry} between the players: one player (e.g., player 1) has complete information, i.e., they know the cost functions of all players, but players 2 through $N$ have incomplete information  about player 1's cost function. 

Moreover, we assume that each agent is aware of its status as either certain or uncertain, and that this information is shared among all agents. 
For example, from our introductory example, the driver merging in from an on-ramp has certainty over their own driving style, but all other road agents on the highway do not. 
However, players 2 through $N$ have the ability to \textit{estimate} or learn about player 1's cost function during game-theoretic interaction. This problem cannot be reformulated as another complete information dynamic game with deterministic dynamics because players 2 through N are not aware of player 1’s cost function and there can be an infinite number of possible cost functions for player 1. We formalize these ideas below. \vspace{-0.25em}
 
\para{Certain Player: Cost Parameterization}
Without loss of generality, let player 1 be the agent with complete information of the game, including the cost functions of other players.
We model player 1's task-centric cost function, $c^1_t(x_t, u_t; \trueparam)$, as parameterized by a low-dimensional parameter, $\trueparam \in \Theta$, which could in theory be discrete (e.g., aggressive or passive driving style) or continuous (e.g., weights on a linear feature basis). \vspace{-0.25em} 

\para{Uncertain Players: Estimation and Cost Functions}
All agents, except for player 1, are uncertain about player 1's cost function parameter. They maintain \textit{estimates} of this parameter via $\est$, which in general can be a full Bayesian belief or a point estimate. 
All uncertain agents possess the ability to learn, based on the  joint physical states ($x_t$) and the action of player 1 ($u_t^1$) observed during interaction. 
Mathematically, for any uncertain player $j\in\{2,3\dots,N\}$ and their associated estimate $\est^j_t$ at time $t$, let $\est^j_{t+1} = g_t(\est_t^j, x_t, u_t^1)$ be the updated estimate via update rule $g_t$. 
Ultimately, each uncertain player aims to minimize their own cost function $c_t^j(x_t,u_t)$.\vspace{-0.25em}

\para{Intent Demonstration Formulation}
We can now formulate the intent demonstration problem in general-sum games. One of our core ideas is to augment player 1's state space with the estimates of all uncertain agent's beliefs. 
Let the vector of all uncertain agent's current estimates be denoted by $\est_t := [\est_t^2,\dots,\est_t^N]$.
We model the certain agent's cost as a combination of their task-centric cost, $c_t^1(x_t,u_t;\trueparam)$, (e.g., for an autonomous car this could be lane-keeping and smoothness of motion), and the ``error'' between the uncertain agent's estimates and the true intent, $\DemoCost(\est_t,\trueparam)$, (e.g., expressing that they are aggressive or in a rush): \vspace{-0.2em}
\begin{equation*}
    \TeachingCost_t^1(x_t,\est_t,u_t;\trueparam):= \taskweight\cdot c_t^1(x_t,u_t;\trueparam)+\intentweight \cdot \DemoCost(\est_t, \trueparam),\vspace{-0.2em}
\end{equation*}
where $\taskweight,\intentweight\ge 0$ are hyper-parameters. 
Intuitively, this enables player 1 to synthesize a range of behaviors, from prioritizing task-cost and only influencing the uncertain agent's beliefs when beneficial for minimizing task cost (i.e., $\taskweight > 0, \intentweight \equiv 0$), to encouraging player 1 to actively express their intent (i.e., $\taskweight \equiv 0, \intentweight > 0$).  
Ultimately, player 1's intent demonstration problem 
optimizes their augmented cost function subject to several key constraints:\vspace{-0.5em}
\begin{subequations}\label{eq:active teaching problem}
    \begin{align}
        \min_{\{u_t^1\}_{t=0}^T} &\sum_{t=0}^T \TeachingCost_t^1(x_t,\est_t,u_t;\trueparam) \\
        \textrm{s.t. }& x_{t+1} = f_t(x_t,u_t),\forall t \in \T  
        \label{eq:phys_dyns}\\
        & \est_{t+1}^j =g_t(\est_t^j,x_t,u_t^1),\forall j\in \N\setminus \{1\}, \forall t \in \T 
        \label{eq:est_dyns}\\
        & u_t^j = \pi_t^j(x_t;\est_t^j),\forall j\in \N\setminus \{1\},\forall t \in \T 
        \label{eq:uncertain_policy}\\
        & x_0 = x^\mathrm{init}
        , ~\est_0 = \est_{\mathrm{init}} \label{eq:init_cond}.  
    \end{align}
\end{subequations}
Here, Equations~\eqref{eq:phys_dyns} and \eqref{eq:est_dyns} constrain the solution to abide by the  physical dynamics of the joint system and ensure that the estimates of the uncertain players follow their update rules. 
Given any player's current estimate $\est_t^i$, Equation~\eqref{eq:uncertain_policy} models the uncertain players as rationally responding under their current FNE strategy\footnote{We assume that there exists a unique FNE. When multiple FNEs exist, we can align the FNE strategies of players by taking the technique in \cite{peters2020inference}. 
} 
$\pi_t^i(x_t;\est_t^i)$, \textit{assuming that all agents also play under the player $i$'s current intent estimate, $\est_t^i$}.  
Note that this is simply a virtual game model in the mind of each uncertain player (see purple dashed box in Figure~\ref{fig:decision process}). In reality, player 1 can behave differently than the current estimate $\est_t^i$; however, this is not a problem for player 2 since they will update their intent estimate at the next timestep. 
Finally, similar to prior first-order belief assumptions \cite{schwarting2021stochastic}, in Equation~\eqref{eq:init_cond} we assume that the initial estimates, $\est^j_0$, of each uncertain player $j\in\{2,\dots,N\}$ are common knowledge. 
An illustrative diagram of our interaction model between two players is visualized in Figure~\ref{fig:decision process}. 
\begin{figure*}[t!]
    \centering
    \includegraphics[width = 0.95\linewidth]{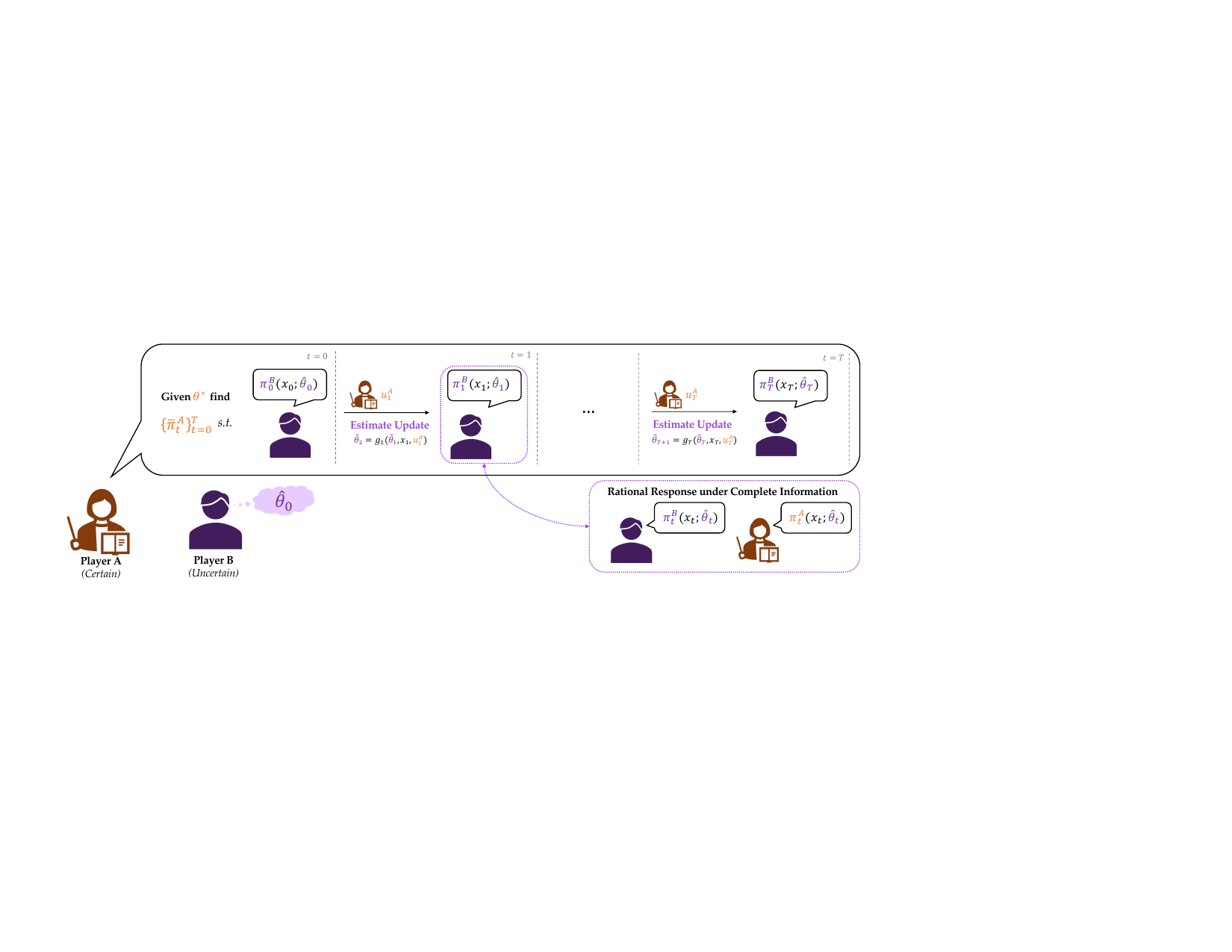}
    \vspace{-1.1em}
    \caption{\textbf{Intent Demonstration Problem in General-Sum Games.} The \textit{certain} player A optimizes $u_t^A=\TeachingPolicyUnLabeled_t^A(x_t, \est_t ;\trueparam)$, which trades off its own task cost and demonstrating their intent. The \textit{uncertain} player B engages with player A through rational actions $u_t^B =\policy_t^B(x_t;\est_t)$ and updates their estimate $\est_t$ of player $A$'s intent $\trueparam$ by observing $A$'s actions. This enables player $A$ to choose to influence player $B$'s estimate.}
    \label{fig:decision process}\vspace{-1.2em}
\end{figure*}

{\revise{}\section{Theoretical and Algorithmic Results}\vspace{-0.3em}
In this section, we investigate both the theoretical and algorithmic properties of the proposed intent demonstration formulation, involving a certain player (player 1) and uncertain players (players 2 to N). 
{\finalrevise We begin by analyzing LQ games, a class of games well-suited for theoretical analysis, as the dynamics $f_t$ are linear and each player’s cost $c_t^i$ is quadratic for all $t \in \T$ and $i \in \N$.} We establish rigorous theoretical guarantees concerning the efficiency of intent teaching. Subsequently, we extend our framework algorithmically to address intent demonstration problems within multi-player nonlinear games, such as those incorporating nonlinear Bayesian estimation rules.\vspace{-0.3em}

\subsection{LQ Games with Linear Estimation Dynamics}\label{subsection:LQ games}\vspace{-0.3em}

\para{LQ Setup} 
We consider settings where player 1's true intent parameter is a continuous goal parameter (i.e., part of their cost). Each player $j\in\{2,\dots,N\}$ maintains a point estimate $\est_t^j$ of $\trueparam$. Let the joint physical dynamics in optimization problem \eqref{eq:active teaching problem} be a time-varying linear system,
$
f_t := A_tx_t + \sum_{i=1}^N B_t^i u_t^i, \quad t \in \T, 
$
with $A_t\in\mathbb{R}^{n\times n}$ and $B_t^i \in \mathbb{R}^{n\times m}$. Let player 1's task and intent-demonstration costs be quadratic in physical state and control: $c_t^1(x_t,u_t;\trueparam)  := x_t^\top Q_t^1 x_t + u_t^{1\top }R_t^1 u_t^1 + x_t^\top \trueparam$ and  $\DemoCost(\est_t, \trueparam) := \sum_{j=2}^N \|\est_t^j - \trueparam\|_2^2$. Similarly, for each $i\in \{1,\dots,N\}$, let player $i$'s quadratic cost be $c_t^i(x_t,u_t)  := x_t^\top Q_t^i x_t + u_t^{i\top }R_t^i u_t^i$
where $Q_t^i\in\mathbb{R}^{n\times n}$ and $R_t^i\in\mathbb{R}^{m\times m}$ are positive semi-definite matrices. 

\para{Uncertain Player's Feedback Policy} 
Given their current point estimate, $\est_t^j$, the uncertain player $j$ rationally responds under their current FNE policy $\pi_t^j(x_t;\est_t^j)$, assuming a complete information game where player $1$ also acts rationally under player $j$'s estimate, $u_t^1=\pi_t^1(x_t;\est_t^j)$. 
Importantly, since we are in the LQ setting, all players' complete information game FNE policies are linear feedback policies \cite{bacsar1998dynamic}. 
%

\para{Linear Estimation Dynamics} Finally, let the estimate dynamics of an uncertain player $j\in\{2\dots,N\}$ to be linear in state and estimate. Considering a step size $\stepsize>0$, we study a gradient descent-based maximum likelihood estimation (MLE) update rule \cite{losey2019learning}, $g_t(\est_t^j,x_t,u_t^1)$, as \vspace{-0.3em}
\begin{equation}
    g_t := \est_t^j - \stepsize \nabla_{\est_t^j}\|u_t^1 - \pi_t^1(x_t;\est_t^j)\|_2^2.
    \label{eq:mle_linear_update_rule}\vspace{-0.3em}
\end{equation}
Player $j$ updates their estimate based on the difference between the action they expected player 1 to take under their estimate, $\pi_t^1(x_t;\est_t^j)$, and player 1's observed action,~$u_t^1$.  

\para{Bellman Equation and Algorithm} When an uncertain player learns via a linear MLE update rule, intent demonstration is an LQR problem in the joint physical state $x_t$, the estimate $\est_t$, and the true cost parameter $\param^*$. The Bellman equation for player 1's intent demonstration problem specified in Equation~\eqref{eq:active teaching problem}  is defined as:\vspace{-0.3em}
\begin{equation}
\begin{aligned}
V_t^1(x_t,\est_t;\trueparam) =&\min_{u_t^1}  \TeachingCost_t^1(x_t,\est_t,u_t^1,\{\pi_t^j(x_t;\est_t^j)\}_{j=2}^N;\trueparam ) \\& + V_{t+1}^1(x_{t+1},\est_{t+1};\trueparam).  \vspace{-0.3em} 
\end{aligned}
\label{eq:value_function}
\end{equation}
With this Bellman equation in hand, we can now pose our intent demonstration Algorithm~\ref{alg:LQ games} and leverage a suite of off-the-shelf numerical techniques for each component of our algorithm. Specifically, in Algorithm~\ref{alg:LQ games}, we first solve a complete information linear-quadratic game for all players under each possible intent parameter $\param \in \Theta$. Importantly, here we can obtain feedback policies, $\{\pi_t^i\}_{i=1,t=0}^{N,T}$, for all agents with efficient (polynomial time) off-the-shelf algorithms. These feedback policies are re-used by all players. Player $j\in\{2,\dots,N\}$ uses $\pi_t^1(x_t;\est_t^j)$ to predict player 1's actions under their current estimate, $\est_t^j$, and then update the estimate. 
Player $1$ plans over the estimation dynamics of players $j\in\{2,\dots,N\}$ when it solves the LQR problem leveraging the value function specified in Equation~\eqref{eq:value_function}. Once again, this yields a feedback control law for player 1 in the joint physical and estimate state space, $\TeachingPolicy(x_t, \est_t; \trueparam)$, and enjoys the benefits of off-the-shelf LQR solvers. We note that the active intent demonstration policy computed by Algorithm~\ref{alg:LQ games} is guaranteed to converge to the optimal one when the associated LQ games and the LQR problems are well-defined and admit valid solutions. 


\begin{algorithm}[t!]
\caption{Strategic Intent Demonstration Games}\label{alg:LQ games}
\begin{algorithmic}[1]\revise{}
\REQUIRE dynamics $f $, player 1's task cost $c_t^1(x, u; \trueparam)$ and demonstration cost $\DemoCost(\est, \trueparam)$, $\taskweight,\intentweight\ge0$, player $j$'s cost $c_t^j(x, u)$, initial estimate $\est_0^j$, for each $j\in\{2,\dots,N\}$, and 
estimation~dynamics~$g$ \\
\tcp{Solve complete information game for all potential intents}
\STATE $\{{\pi_t^i(x;\param)}\}_{i=1,t=0}^{N,T} \gets \texttt{FeedbackGame}(\{c_t^i\}_{i=1}^N, f)$ \label{algstep: solving games}
\STATE $\Pi = \{{\pi_t^i(x; \theta)}\}_{i=1,t=0}^{N,T}$\\
\tcp{Compute intent demonstration policy}
\STATE $\{\TeachingPolicy(x, \est; \trueparam)\}_{t=0}^T \gets \texttt{OptimalControl}(\est_0,\TeachingCost^1_t, f, g)$ \label{algstep: solving teaching}
\STATE $\Pi \gets \{\TeachingPolicy(x,\est; \trueparam)\}_{t=0}^T \cup \Pi$
\RETURN $\Pi$
\end{algorithmic}
\end{algorithm}

\para{Theoretical Results} Finally, in the LQ setting, we prove a sufficient condition for the existence of an intent demonstration policy for player 1 which guarantees to drive player $j$'s estimate to the true parameter exponentially fast, for all $j\in\{2,\dots,N\}$. Our proof operates under player 1's cost, $\TeachingCost^1_t$, with $\taskweight =0 $ and $\intentweight>0$, meaning that player $1$ only considers demonstrating their intent.  
\begin{proposition}[Effective Intent Demonstration]\label{prop:teaching guarantee}
    Consider a two-player LQ game. Suppose that the linear policy $\pi_t^1(x_t;\param)$ takes the form
    $
        \pi_t^1(x_t;\param) =  K_{t,x}^1 x_t  + K_{t,\param}^1 \param
    ,\ \forall t\in \T
    $
    and $K_{t,\param}^{1\top} K_{t,\param}^1>0$. Moreover, let player $j\in\{2,\dots,N\}$ learn via linear estimate dynamics $\est_{t+1}^j = g_t(\est_t^j, x_t, u_t^1) $.
    Pick a step size $\stepsize\in(0,1)$ such that the largest singular value of $(I-\stepsize K_{t,\param}^{1\top}K_{t,\param}^1)$ is less than $1$, $\forall t\le T$. Then, there exists a linear intent demonstration policy $u_t^1=\bar{\pi}_t^1 (x_t,\est_t; \trueparam)$ such that $\|\est_{t+1}^j - \trueparam\|_2 < c \|\est_t^j - \trueparam\|_2 $, $\forall t\in\T$, $\forall j \in \{2,\dots,N\}$,
    where $0<c<1$ is a constant dependent on $\bar{\pi}_t^1$. 
\end{proposition}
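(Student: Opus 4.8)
The plan is to exhibit an explicit intent demonstration policy and show it forces the estimation error to contract at a fixed geometric rate; since the proposition only asserts existence, displaying one admissible policy that works suffices. The natural candidate is to have player~$1$ disregard the belief coordinate and execute the complete-information feedback it would use were its intent common knowledge, $\TeachingPolicy(x_t,\est_t;\trueparam) = \pi_t^1(x_t;\trueparam) = K_{t,x}^1 x_t + K_{t,\param}^1 \trueparam$. This is a valid member of the admissible linear policy class (it merely ignores the $\est_t$ argument), so establishing the error bound for this choice proves the claim.

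First I would make the MLE update in Equation~\eqref{eq:mle_linear_update_rule} explicit. Differentiating the prediction loss $\|u_t^1 - \pi_t^1(x_t;\est_t^j)\|_2^2$ in $\est_t^j$, with $\pi_t^1(x_t;\est_t^j) = K_{t,x}^1 x_t + K_{t,\param}^1 \est_t^j$, produces a gradient proportional to $K_{t,\param}^{1\top}(u_t^1 - \pi_t^1(x_t;\est_t^j))$, so that (absorbing the numerical factor from the gradient into $\stepsize$)
\begin{equation*}
    \est_{t+1}^j = \est_t^j + \stepsize\, K_{t,\param}^{1\top}\big(u_t^1 - \pi_t^1(x_t;\est_t^j)\big).
\end{equation*}

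The crux of the argument is to substitute player~$1$'s actual action $u_t^1 = K_{t,x}^1 x_t + K_{t,\param}^1 \trueparam$ into this update. Because player~$j$'s internal prediction $\pi_t^1(x_t;\est_t^j)$ and player~$1$'s executed action $\pi_t^1(x_t;\trueparam)$ are generated by the \emph{same} complete-information gains, the physical-state terms $K_{t,x}^1 x_t$ cancel exactly, leaving $u_t^1 - \pi_t^1(x_t;\est_t^j) = K_{t,\param}^1(\trueparam - \est_t^j)$, which is independent of the realized trajectory $x_t$. Subtracting $\trueparam$ from both sides gives the closed-form error recursion
\begin{equation*}
    \est_{t+1}^j - \trueparam = \big(I - \stepsize\, K_{t,\param}^{1\top} K_{t,\param}^1\big)(\est_t^j - \trueparam).
\end{equation*}
Taking Euclidean norms and using submultiplicativity yields $\|\est_{t+1}^j - \trueparam\|_2 \le \sigma_{\max}(I - \stepsize K_{t,\param}^{1\top} K_{t,\param}^1)\,\|\est_t^j - \trueparam\|_2$, and the step-size hypothesis makes each spectral norm strictly below $1$. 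Setting $c := \max_{t\le T}\sigma_{\max}(I - \stepsize K_{t,\param}^{1\top}K_{t,\param}^1)$ --- a maximum of finitely many numbers, each lying in $[0,1)$ by the step-size condition and hence itself in $[0,1)$, strictly positive away from the degenerate case $\stepsize K_{t,\param}^{1\top}K_{t,\param}^1 = I$ --- delivers the uniform bound $\|\est_{t+1}^j - \trueparam\|_2 < c\,\|\est_t^j - \trueparam\|_2$ for all $t\in\T$. For several uncertain players the same action $u_t^1$ is observed by every $j$, so the identical recursion and the same constant $c$ apply to each, establishing the result for all $j\in\{2,\dots,N\}$.

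The one genuinely delicate point --- and what makes the construction so clean --- is this cancellation of the state-feedback term, which decouples the belief recursion from the closed-loop trajectory produced by the coupled game dynamics. I would emphasize that it hinges on player~$j$'s prediction model using exactly the same $x$-gain $K_{t,x}^1$ that player~$1$ employs, a property guaranteed because both are drawn from the common complete-information FNE policies of the underlying LQ game. Everything else --- verifying the gradient, the norm inequality, and the finiteness of the maximum over the finite horizon $T$ --- is routine.
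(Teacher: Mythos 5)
Your proposal is correct and follows essentially the same route as the paper's proof: both choose the demonstration policy $\bar{\pi}_t^1(x_t,\est_t;\trueparam):=\pi_t^1(x_t;\trueparam)$, exploit the cancellation of the $K_{t,x}^1 x_t$ terms in the MLE update to obtain the error recursion $\est_{t+1}^j-\trueparam=(I-\stepsize K_{t,\param}^{1\top}K_{t,\param}^1)(\est_t^j-\trueparam)$, and conclude contraction from the singular-value hypothesis. Your write-up is somewhat more explicit about the gradient computation and the trajectory-independence of the recursion, but the argument is the same.
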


\begin{proof}
    The proof can be found in the Appendix.
\end{proof}

Proposition~\ref{prop:teaching guarantee} is a feasibility result, and the strong assumption on the form of the policy $\pi_t^1$ is not necessary for the existence of active intent demonstration policies. Moreover, always actively demonstrating the intent to other uncertain agents could be excessive and may impair the certain agent's task performance. We show in the following result that the active teaching policy can trade-off between the certain agent's task completion and intent demonstration such that it can achieve a task performance even higher than in the complete information game, when setting $\taskweight>0$ and $\intentweight=0$.
\begin{proposition}[Strategic Intent Demonstration]\label{prop:strategic teaching}
    Let $\taskweight=1$ and $\intentweight=0$. 
    Suppose that $g_t$ is a linear estimate dynamics and each player's cost is convex with respect to the state $x_t$ and the control $u_t$. Let $\{\tilde{u}_t^{i}\}_{i=1,t=0}^{N,T}$ be the controls of all players corresponding to the Nash equilibrium in the complete information game, and denote by $\{\tilde{x}_t\}_{t=0}^{T+1}$ the resulted Nash equilibrium state trajectory. Moreover, suppose that there exists a stage $t <T$ such that the Jacobian of the cost-to-go function $\tilde{c}_{t:T}^1$, defined in \eqref{eq:aggregated task cost}, with respect to the control $\tilde{u}^1_{t:T}:= [\tilde{u}^1_t, \tilde{u}^1_{t+1},\dots,\tilde{u}^1_T]$ is nonzero,\vspace{-0.4em}
    \begin{equation}\label{eq:aggregated task cost}
    \begin{aligned}
        \tilde{c}_{t:T}^1(\tilde{x}_t, u_{t:T}^1): = &  \sum_{ \tau = t }^T c_\tau^1(x_\tau, u_\tau^1, \{\pi_\tau^j(x_\tau; \est_\tau^j)\}_{j=2}^N; \trueparam)\\
        \textrm{s.t. }& x_{\tau+1} = f_\tau(x_\tau,u_\tau^1, \{\pi_\tau^j(x_\tau; \est_\tau^j)\}_{j=2}^N), \\
        & \est_{\tau+1}^j = g_\tau(\est_\tau^j, x_\tau, u_\tau^1),  j\in \N\setminus\{1\}\\
        & \tau\in[t, T], x_t = \tilde{x}_t, \est_t^j = \trueparam, j\in \N\setminus\{1\}
    \end{aligned}
    \end{equation}
    then, the optimal cost of player 1 in \eqref{eq:active teaching problem} is strictly lower than its optimal cost in the complete information game. 
\end{proposition}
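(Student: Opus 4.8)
The plan is to prove the two inequalities separately: (i) player 1's complete-information Nash solution is \emph{feasible} for the intent-demonstration problem \eqref{eq:active teaching problem} and attains exactly the complete-information cost, so the optimal intent-demonstration cost is no larger; and (ii) the nonzero-Jacobian hypothesis supplies a first-order descent direction that makes the inequality \emph{strict}. Throughout I take the benchmark to be beliefs initialized at the truth, $\est_0^j = \trueparam$, matching the complete-information game.

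For feasibility, I would replay player 1's complete-information Nash controls $\tilde{u}_{0:T}^1$ inside \eqref{eq:active teaching problem} and argue that the estimates stay pinned at the truth, $\est_\tau^j \equiv \trueparam$. Along the Nash trajectory, player 1's observed action agrees with what each uncertain player predicts under $\est_\tau^j = \trueparam$, i.e. $\tilde{u}_\tau^1 = \pi_\tau^1(\tilde{x}_\tau;\trueparam)$, so the MLE update \eqref{eq:mle_linear_update_rule} has vanishing residual and leaves $\trueparam$ invariant. Hence the uncertain players respond with $\pi_\tau^j(\tilde{x}_\tau;\trueparam)=\tilde{u}_\tau^j$, the joint trajectory coincides with $\{\tilde{x}_\tau\}$, and, since $\taskweight=1,\intentweight=0$, the incurred cost equals $\sum_{\tau=0}^T c_\tau^1(\tilde{x}_\tau,\tilde{u}_\tau;\trueparam)$, which is player 1's complete-information cost. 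This shows the optimum of \eqref{eq:active teaching problem} is at most the complete-information cost.

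To upgrade this to a strict inequality, I would restrict attention to controls that agree with $\tilde{u}^1$ on $[0,t-1]$, reaching $x_t=\tilde{x}_t$ and $\est_t^j=\trueparam$ (the initial data of \eqref{eq:aggregated task cost}), and deviate only on $[t,T]$. The total cost then splits into a fixed prefix plus $\tilde{c}_{t:T}^1(\tilde{x}_t,u_{t:T}^1)$, with the prefix cost and $\tilde{c}_{t:T}^1(\tilde{x}_t,\tilde{u}_{t:T}^1)$ summing to the complete-information cost. Because $f_\tau$, $g_\tau$, and each $\pi_\tau^j(\cdot;\cdot)$ are linear, the states and estimates are affine in $u_{t:T}^1$; composing with the convex stage costs $c_\tau^1$ makes $\tilde{c}_{t:T}^1$ a convex quadratic in $u_{t:T}^1$, in particular smooth, and the controls are unconstrained. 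By the nonzero-Jacobian hypothesis, $\nabla_{u_{t:T}^1}\tilde{c}_{t:T}^1(\tilde{x}_t,\tilde{u}_{t:T}^1)\neq 0$, so a small step along its negative yields a feasible control sequence whose first-order cost change is strictly negative; the resulting total cost is strictly below the complete-information cost.

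I expect the main obstacle to be conceptual rather than computational: explaining how $\tilde{c}_{t:T}^1$ can have a nonzero gradient at $\tilde{u}^1$ even though $\tilde{u}^1$ is a Nash best response whose stationarity would seem to force it to vanish. The resolution I would emphasize is that the two cost-to-go problems differ in exactly one channel. The best-response cost-to-go pins the uncertain players at $\pi_\tau^j(x_\tau;\trueparam)$, so its gradient at $\tilde{u}^1$ is zero; the intent-demonstration cost-to-go $\tilde{c}_{t:T}^1$ additionally routes player 1's control through the belief dynamics $g_\tau$ into the responses $\pi_\tau^j(x_\tau;\est_\tau^j)$. The two agree in value at $\tilde{u}^1$ (estimates are still $\trueparam$ there) but not in gradient, and the difference is precisely the belief-manipulation term. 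Thus $\nabla\tilde{c}_{t:T}^1\neq 0$ asserts that player 1 can profitably steer beliefs \emph{away} from $\trueparam$ to induce responses more favorable than the truthful ones, which is unavailable in the complete-information game and is exactly what produces the strict gain. A minor point to verify is the invariance of $\trueparam$ under $g_\tau$ along the Nash path, which holds for \eqref{eq:mle_linear_update_rule} and more generally whenever the Nash action is consistent with the truthful estimate.
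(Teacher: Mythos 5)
Your proposal is correct and follows essentially the same two-step argument as the paper's proof: feasibility of the complete-information Nash controls in \eqref{eq:active teaching problem} gives the weak inequality, and the nonzero Jacobian of $\tilde{c}_{t:T}^1$ together with smoothness/convexity yields a strictly improving local deviation on $[t,T]$. Your additional remarks---making explicit that the estimates stay pinned at $\trueparam$ along the Nash path and explaining why the gradient need not vanish despite $\tilde{u}^1$ being a best response---are useful elaborations of steps the paper leaves implicit, but they do not change the underlying approach.
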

\begin{proof}
    The proof can be found in the Appendix. 
\end{proof}





\begin{figure*}[t!]
    \centering
    \includegraphics[width=0.7\linewidth]{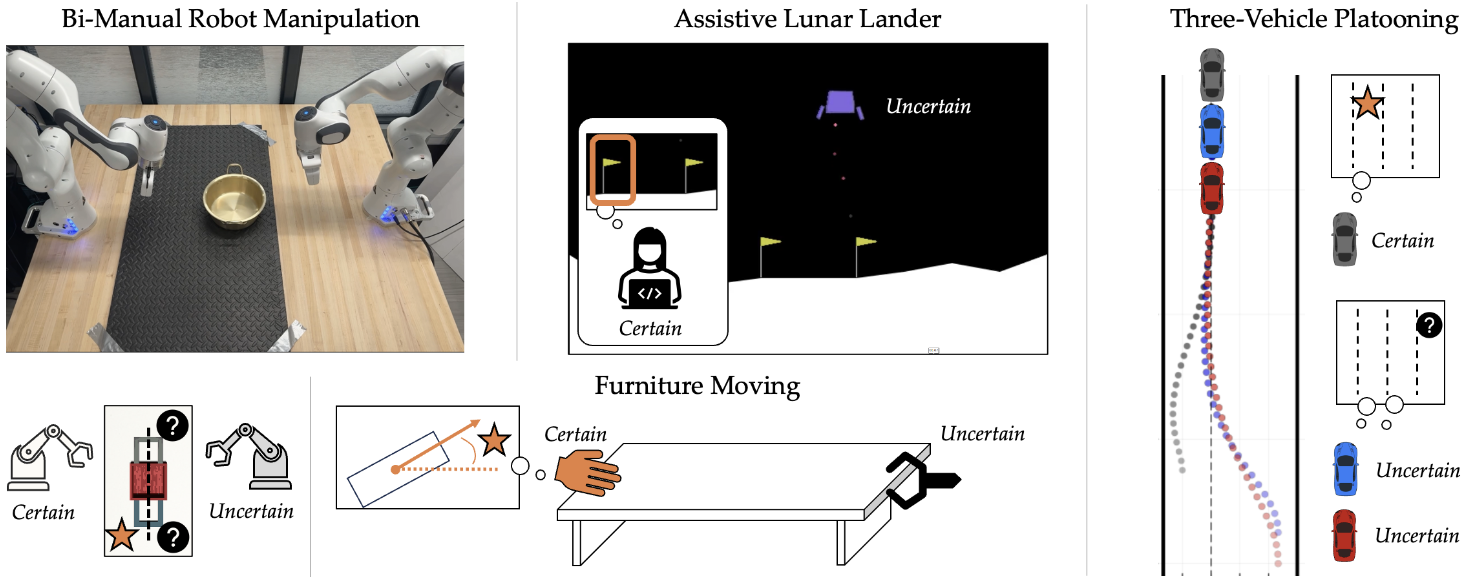}
    \vspace{-1em}
    \caption{\textbf{Environments.} Four incomplete information general-sum games considered in this work. }
    \label{fig:environments}
    \vspace{-1.5em}
\end{figure*}

Proposition~\ref{prop:strategic teaching} suggests that the ability of influencing the uncertain agent's belief enables the certain agent to achieve a higher task performance. 
In practice, we can replace the estimation dynamics in \eqref{eq:mle_linear_update_rule} with other types of estimation dynamics, e.g., Bayesian inference or general maximum likelihood estimation. We will explore this in the next subsection. 

\subsection{Nonlinear Games with Nonlinear Estimation Dynamics}\vspace{-0.1em}

With small modifications, we can adapt Algorithm \ref{alg:LQ games} to non-quadratic costs and for nonlinear dynamics. This is particularly important as many estimation update rules, such as the Bayesian belief update, are nonlinear in the estimate.

{\finalrevise Our algorithm builds on iterative LQ games (\texttt{iLQGames}) and iterative LQR (\texttt{iLQR}), which have polynomial computational complexity in the system dimension and have been shown to handle high-dimensional problems efficiently \cite{fridovich2020efficient,todorov2005generalized}. Leveraging this structure, our method remains tractable and scales to multi-agent settings. Similar to Algorithm~\ref{alg:LQ games}, we approximate the complete-information FNE policies using an \texttt{iLQGames} solver \cite{fridovich2020efficient}, which iteratively linearizes the dynamics and quadratically approximates the costs around the current trajectory. Although such approximations cannot capture global nonlinearities, they provide sufficient local curvature information to compute approximate feedback Nash equilibria that guide the nonlinear system toward equilibrium. The solver then computes an optimal control for the local game, updates the trajectory, and repeats until convergence.}

Similar to Section~\ref{subsection:LQ games}, after we compute the complete information \texttt{iLQGames} policies $\{\pi_t^i(x_t;\param)\}_{i=1}^N$, we use these policies, once again, in both the uncertain player's estimation dynamics and for the certain player's intent demonstration. 
For example, if player $j\in\{2,\dots,N\}$ maintains a Gaussian belief $\est_t^j := b_t^j(\param)=\mathcal{N}(\mu_t^j, \Sigma_t^j)$ over the intent parameter and learns via a nonlinear belief update rule like Bayesian inference, they use $\pi^1_t$ computed from \texttt{iLQGames} to construct their (Gaussian) likelihood function and obtain the posterior:\vspace{-0.4em}
\begin{equation}
\begin{aligned}
    b_{t+1}^j(\param) &\propto p(u_t^1| x_t, \param) \cdot b_t^j(\param)
\end{aligned}\vspace{-0.4em}
\end{equation}
Assuming that the likelihood model $p(u_t^1| x_t, \param)$ follows a Gaussian distribution $ \mathcal{N}(\pi_t^1(x_t;\param),I)$, and the initial belief is also a Gaussian distribution $\mathcal{N}(\mu_0^j, \Sigma_0^j)$, we can simplify the belief update by substituting the policy $\pi_t^1(x_t;\param) $ and obtain the update rule of the belief distribution parameters:\vspace{-0.4em}
\begin{equation*}
    \begin{aligned}
        \mu_{t+1}^j = & \mu_t^j + \Sigma_t^j \cdot \nabla_{\param} \pi_t^{1\top} \cdot (I + \nabla_{\param} \pi_t^1 \cdot \Sigma_t^j \cdot \nabla_\param \pi_t^{1\top})^{-1} \cdot \\
        &(u_t^1 - \pi_t^1(x_t; \mu_t^j)) \\
        \Sigma_{t+1}^j = & \Sigma_t^j - \Sigma_t^j \cdot \nabla_\param \pi_t^{1\top} \cdot (I + \nabla_\param \pi_t^1 \cdot \Sigma_t^j\cdot \nabla_\param \pi_t^{1\top})^{-1} \cdot \\ & \nabla_\param \pi_t^1 \cdot \Sigma_t^j
    \end{aligned}
\end{equation*}
To optimize $c^{demo}(\cdot, \cdot)$, the certain agent can, for example, minimize the error between the average intent under the other agent's belief, $\tilde{\param}_t^j := \mathbb{E}_{\param \sim b_t^j(\param)} [\param]$, and $\trueparam$. From player 1's perspective, instead of solving an LQR problem as in Section~\ref{subsection:LQ games}, it solves an \texttt{iLQR} problem to obtain the intent demonstration policy $\bar{\pi}_t^1$ in the joint physical-estimate space.

\begin{remark}
    We can enhance Algorithm~\ref{alg:LQ games} by integrating deep reinforcement learning to compute policies for intent demonstration problems in general-sum nonlinear dynamic games. For instance, multi-agent reinforcement learning \cite{lowe2017multi} can be applied in step \ref{algstep: solving games} of Algorithm~\ref{alg:LQ games} to compute complete-information FNE policies, while deep reinforcement learning can be used in step \ref{algstep: solving teaching} of Algorithm~\ref{alg:LQ games} to derive a strategic intent demonstration policy.
\end{remark}
}

\section{Experiments}\label{sec:experiment}

In this section, we evaluate our algorithm\footnote{The source code and additional details of the experiments are available at \href{https://github.com/jamesjingqili/Active-Intent-Demonstration-in-Games.git}{https://github.com/jamesjingqili/Active-Intent-Demonstration-in-Games.git}.} in four multi-agent scenarios shown in Figure~\ref{fig:environments}
and study the benefits of strategic intent demonstration over alternative game-theoretic interaction models.

\para{Bi-Manual Robot Manipulation}
{\finalrevise We study a robot manipulation task where two arms must coordinate to lift a pot (top left, Figure~\ref{fig:environments}). The certain agent (left) aims to grasp one handle, while the uncertain agent (right) does not know this intent. The left agent’s preferred $y$-goal is $\trueparam \in \mathbb{R}$ (lower handle), and the right agent maintains an evolving estimate $\est \in \mathbb{R}$ via the update rule in Equation~\eqref{eq:mle_linear_update_rule}. The joint state is $x_t = [p_{x,t}^1,p_{y,t}^1,p_{x,t}^2,p_{y,t}^2]$, where players control end-effector velocities $u_t^i=[v_{x,t}^i,v_{y,t}^i]$, $i\in{1,2}$, under double-integrator dynamics. The left agent’s quadratic task cost penalizes distance to the target, collision, and high velocity, while the right agent optimizes a similar cost but is incentivized to grasp the opposite handle. We validated our method in hardware experiments.}

\para{Assistive Lunar Lander}
A lunar lander autopilot shares control with a human pilot. 
The human pilot controls horizontal thrust and wants to land at their preferred destination on the x-axis (top center Figure~\ref{fig:environments}), $\trueparam \in \mathbb{R}$, which is unknown to the autopilot.
The autopilot controls both the vertical and horizontal thrust, aiming to avoid crashing on the ground while conserving fuel. For the convenience of analysis, we focus on its horizontal and vertical movements, excluding the rotation dynamics, and model this interaction as a two-player 
linear-quadratic game. The autopilot maintains a point estimate $\est \in \mathbb{R}$ and learns via linear estimate update rule (e.g., as in Equation~\eqref{eq:mle_linear_update_rule}).

\para{Furniture Moving}
A human and robot must move table to a known destination together. The human's task cost is parameterized by their desired furniture moving angle, $\trueparam$, and they seek to minimize their effort. The robot maintains a Bayesian belief $\est := b(\param)$ over the human's preferred orientation angle (bottom, Figure~\ref{fig:environments}). 
The joint physical state is position and current table angle $x_t=[p_{t,x}^H,p_{t,y}^H, p_{t,x}^R,p_{t,y}^R, \param_t]$ and players control their $x$ and $y$ velocity. The dynamics of the furniture moving follows a simple kinematics model. The robot learns via a Bayesian belief update.

\para{Three-Vehicle Platooning}
A human driver guides two autonomous vehicles (AV) towards a target lane, unknown to the autonomous vehicles. 
Each vehicle has a unicycle dynamics with a state vector $x_t^i := [p_{t,x}^i,p_{t,y}^i,\psi_t^i, v_t^i]$ and control inputs are acceleration $a_t^i$ and turning rate $w_t^i$ (12-D joint state vector). 
{\revise{Each AV optimizes 1) following the human driver's lane, 2) maintaining a forward orientation, 3) minimizing control effort and 4) avoiding collisions. }}Each AV has a separate Gaussian belief over the human driver's target lane, $\est^i := b^i(\param)$, and updates via Bayesian estimation.

\begin{figure}[t!]
    \centering
    \includegraphics[width=\linewidth]{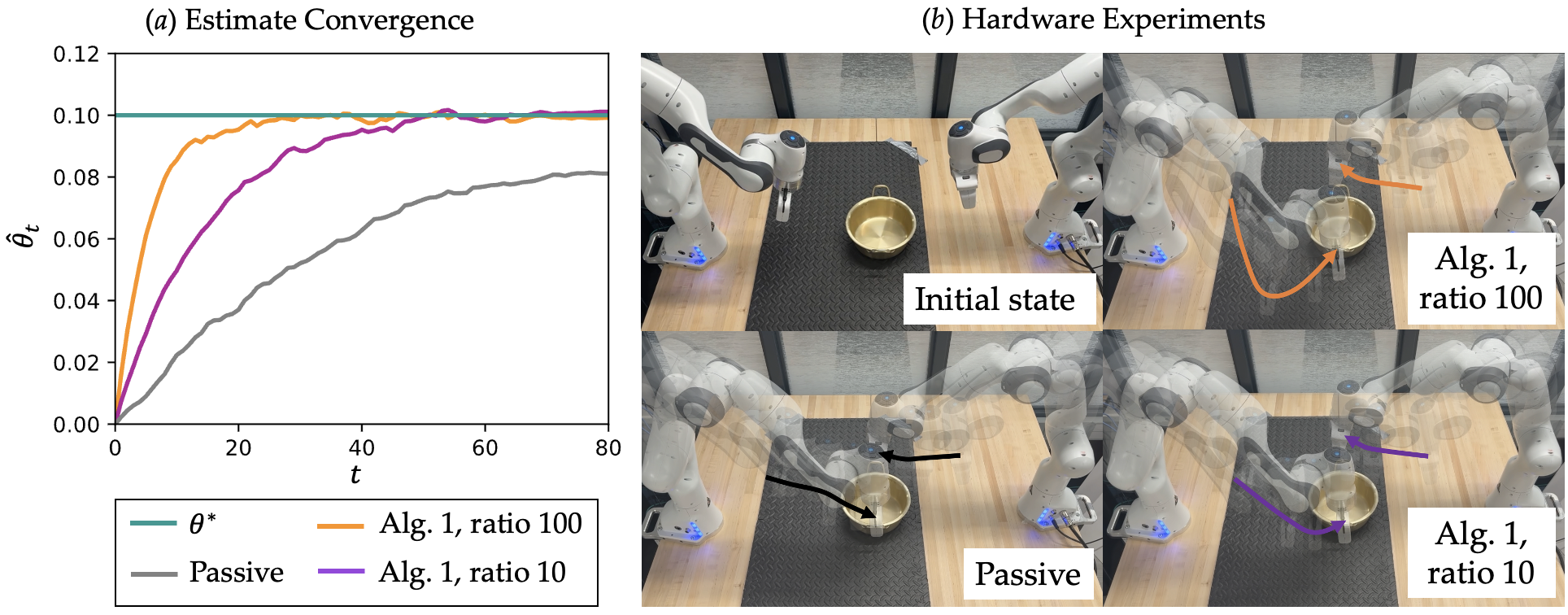}
    \vspace{-1em}
    \caption{\textbf{Results: H1.} \ours accelerates learning by having the certain agent exaggerate its behavior, helping the uncertain agent infer its intent.
    }
    \label{fig:manipulation}\vspace{-0.3em}
\end{figure}
\begin{figure}[t!]
    \centering
    \includegraphics[width=0.9\linewidth]{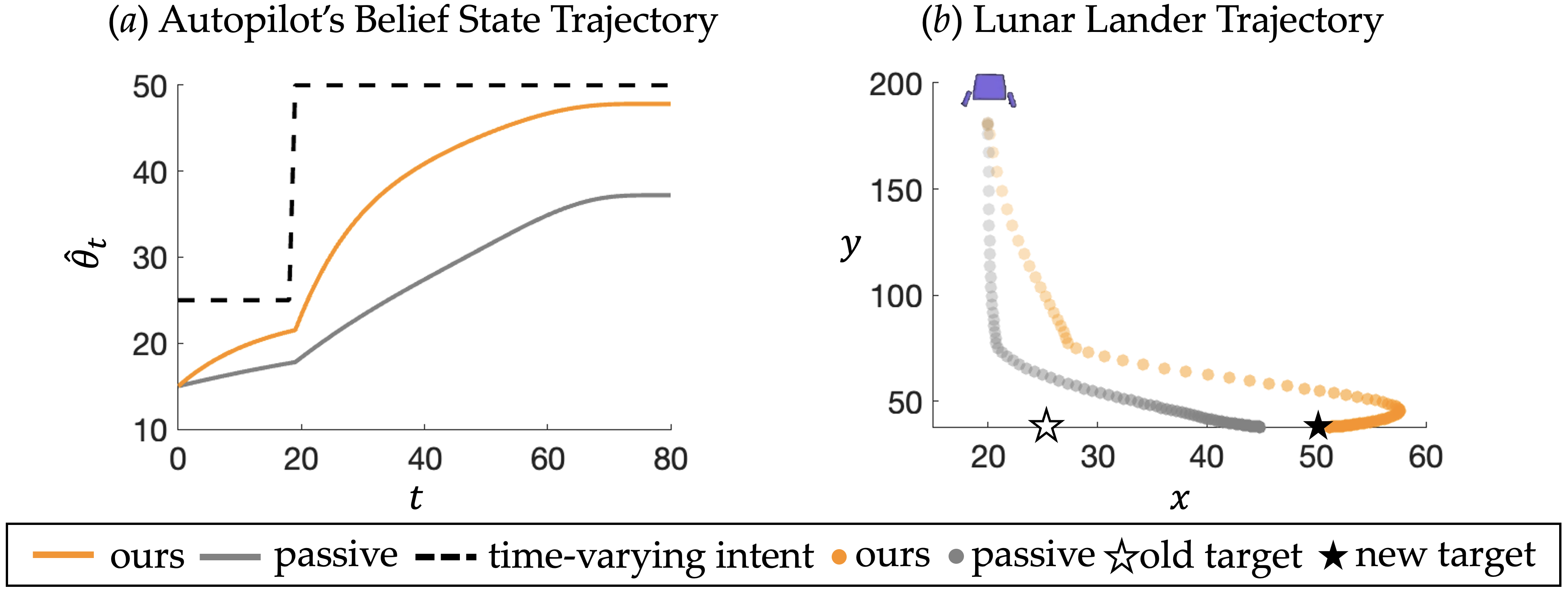}\vspace{-0.8em}
    \caption{\textbf{Results: H2.} \revise{}The human pilot changes their target landing position $\trueparam$ from 25 to 50 at time $t=20$. The strategic intent demonstration policy $\bar{\pi}_t^1$, computed without anticipating this change, efficiently conveys the unforeseen dynamic intent, enabling the autopilot’s belief to converge faster than in the passive game, without the need of recomputing $\bar{\pi}_t^1$.
    }
    \label{fig:varying}\vspace{-1.em}
\end{figure}
\subsection{Empirical Results}
\begin{figure}[t!]
    \centering
    \includegraphics[width=0.9\linewidth, ]{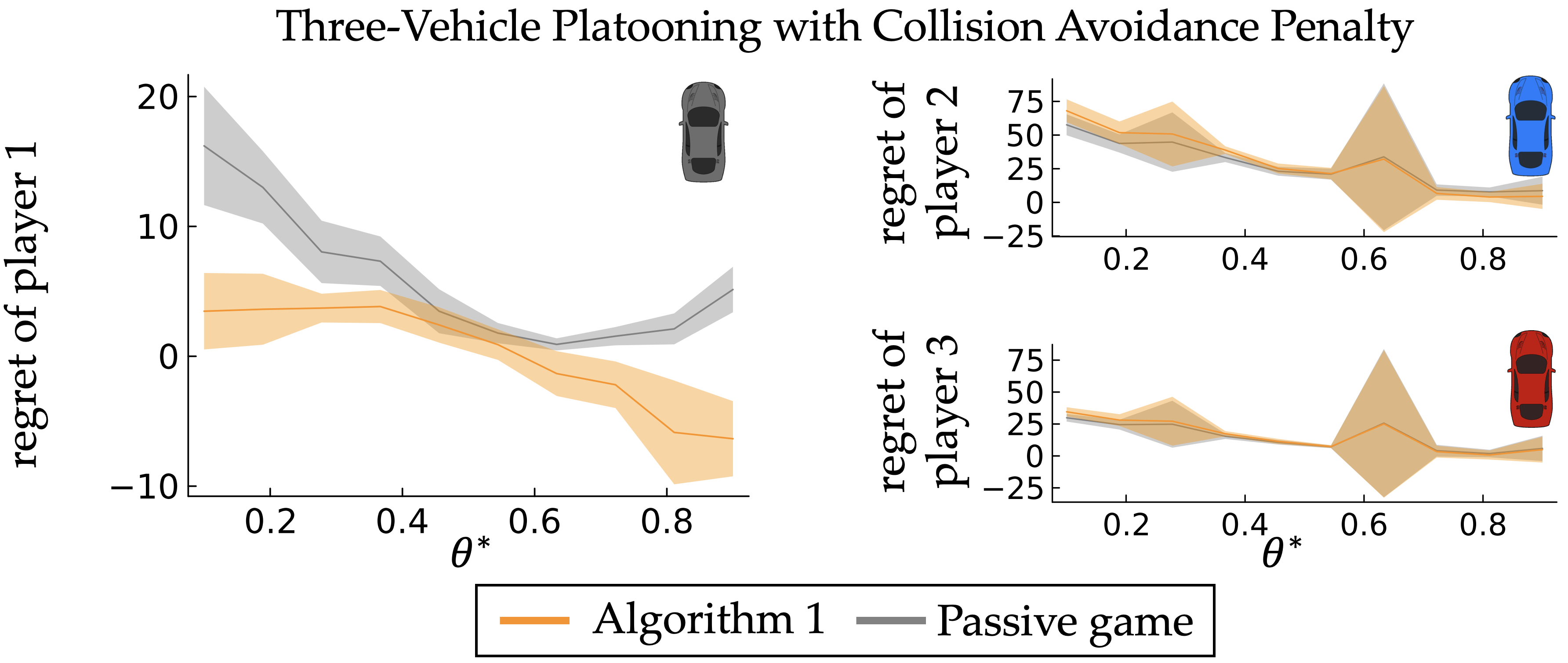}\vspace{-0.9em}
    \caption{\textbf{Results: H3.} \revise{The regrets of the certain player (player 1) under the active teaching strategy are consistently lower compared to those under the passive teaching strategy, across different ground truth intents of the certain player. This empirically validates the claim in Proposition~\ref{prop:strategic teaching}. 
    }}
    \label{fig:regret}\vspace{-1.6em}
\end{figure}

\begin{figure*}[t!]
    \centering
    \includegraphics[width=\textwidth]{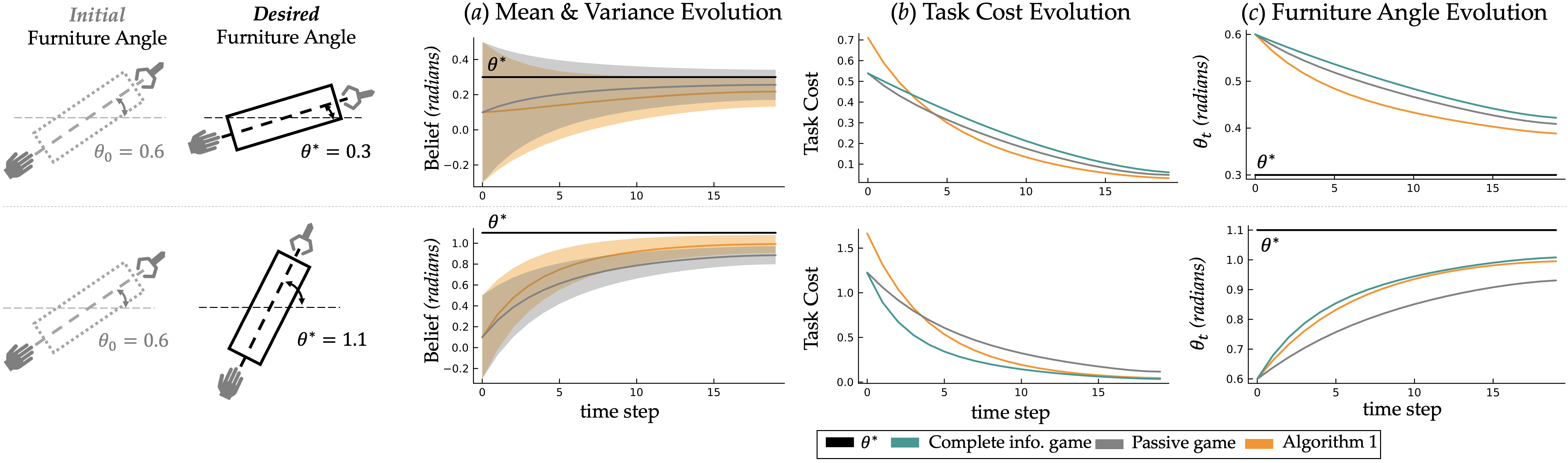}\vspace{-1.2em}
    \caption{\textbf{Results: H4}.
    Even without explicit incentives to express intent, \ours influences the uncertain agent's belief in a way that improves task cost over \passive game (plot (b)). However, intent demonstration is strategic: if the state is already sufficiently good, our method pauses its influence (top left, plot (a)) but still achieves better task performance. 
    }
    \label{fig:furniture}
    \vspace{-1.6em}
\end{figure*}
We compare our game-theoretic intent demonstration algorithm (\ours) with two other models. One is a state-of-the-art incomplete information game solver \cite{le2021lucidgames} where uncertain agents infer intent via a Kalman filter and the certain player acts under a FNE in a complete information setting.
We call this method \passive game since any learning on the part of the uncertain agents is not explicitly planned for by the certain agent. We also compare with a \oracle game model where all players have complete information about each others' intent. 
We study four hypotheses described in detail below. 

\noindent \textit{\textbf{H1.} Uncertain agents coordinating under \ours reduce uncertainty faster than \passive game-theoretic models that do not account for agent learning.} 

\para{Setup and Metrics} We focus on the \textbf{Bi-Manual Robot Manipulation} environment where the uncertain agent maintains a point estimate. The uncertain robot initially believes that the certain robot wants to grab the center of the pot, $\est_0 = 0$.
We measure the convergence of $\est_t$ to $\trueparam$ under \passive game and \ours. 
{\revise{}For our method, we also vary the hyperparameters $\taskweight$ and $\intentweight$, denoted by $\text{ratio} = \frac{\intentweight}{\taskweight}$, to study how different weight ratios between belief alignment and task performance affect the uncertain agents' learning.} 

\para{Results} Figure~\ref{fig:manipulation} shows both quantitative and qualitative results. {\finalrevise 
The left plot in Figure~\ref{fig:manipulation} shows that under the \passive game algorithm, the certain agent (left robot) does not exploit the uncertain agent’s learning dynamics and thus fails to accelerate learning. In contrast, even with a low weight on intent demonstration, \ours enables the certain robot to actively influence the uncertain robot’s estimate dynamics, yielding faster convergence. As the weight increases, the certain agent deliberately \textit{exaggerates} its motion to make its goal more obvious, helping the uncertain agent quickly infer its intent and respond by moving directly toward the complementary pot handle (right, Figure~\ref{fig:manipulation} (b)), supporting~\textbf{H1}. 

Importantly, \ours computes these feedback intent-demonstration actions in real time, generating each action within 0.1 seconds. This efficiency enables strategic decision-making on hardware and ensures that a certain agent can robustly guide the learning of an uncertain agent, even under actuation noise, where the dynamics are less predictable.
}

{\revise{}\noindent \textit{\textbf{H2.} The pre-computed intent demonstration feedback policy $\bar{\pi}_t^1$ can efficiently convey the certain player’s unforeseen changes in its intent without the need of recomputing $\bar{\pi}_t^1$.}

\para{Setup and Metrics} We focus on the \textbf{Assistive Lunar Lander} environment. We measure the belief state and the physical state trajectories of the lunar lander under passive game and Algorithm~\ref{alg:LQ games}. For our method, we set $\rho_1 =1$ and $\rho_2=4$, to ensure the task performance and belief alignment. 

\para{Results} Although the feedback policy $\bar{\pi}_t^1$ is computed under the assumption that the certain agent’s intent remains stationary, it can still effectively convey unforeseen changes in certain agent's intent during deployment. Figure~\ref{fig:varying} demonstrates that the autopilot’s belief rapidly converges toward the initial target $\trueparam=25$ during the interval $t\in[0,20]$, and then adjusts efficiently to the updated human-preferred destination $\trueparam=50$ when $t\ge 20$. This highlights the robustness of the feedback policy $\bar{\pi}_t^1$ in realistic scenarios where task objectives shift unexpectedly during deployment—situations not explicitly considered when computing $\bar{\pi}_t^1$, yet handled effectively due to the feedback policy’s strong generalization to dynamically changing intents.
}



\noindent \textit{\textbf{H3.} The certain agent can improve its task performance by teaching agents with uncertainty.} 

{\revise{}\para{Setup and Metrics} We focus on the \textbf{Three-Vehicle Platooning} environment.  
We measure each player's task regret by comparing the optimal state-action trajectory $\xi^*$ under the \oracle game with the executed state-action trajectory $\xi$ under one of the incomplete information models: $\text{Regret}^i(\xi, \xi^*) := \sum^T_{t=0} [c^i(x_t, u_t) - c^i(x^*_t, u^*_t)]$. Lower regret indicates better performance. We set $\taskweight = 1$ and $\intentweight = 0$ to evaluate whether the policy $\bar{\pi}_t^1$ can strategically reduce the certain agent’s task cost when prioritizing task performance.}

\para{Results} 
{\finalrevise Figure~\ref{fig:regret} shows each player’s regret (y-axis) across all possible true intents of the certain player $\trueparam$ (x-axis) in both environments. In both cases, \ours yields lower regret for the certain player (Player 1) compared to the \passive game baseline. This demonstrates that the certain agent can exploit the estimation dynamics of multiple uncertain agents to improve its task performance, confirming the scalability of our approach from two to more agents and providing direct support for \textbf{H3}.}


\noindent \textit{\textbf{H4.} When $\intentweight \equiv 0$, \ours balances task performance and intent demonstration for the certain agent. } 

{\revise{}\para{Setup and Metrics} We evaluate our method in the \textbf{Furniture Moving} environment, focusing on (1) the uncertain agent’s belief convergence and (2) the certain agent’s task cost. To test whether the certain agent can strategically influence belief without explicitly optimizing for intent demonstration, we set the intent demonstration hyperparameter to $\intentweight = 0$, thereby prioritizing task performance. We consider two true furniture angle preferences: $\trueparam = 0.3$ rad ($\sim 17^\circ$) and $\trueparam = 1.1$ rad ($\sim 63^\circ$). The furniture’s initial angle is always set to $\theta_0 = 0.6$ rad ($\sim 35^\circ$), and the uncertain agent’s initial belief is Gaussian, with mean $0.1$ and variance $0.4$.
}

\para{Results} 
{\finalrevise Even without explicit intent-demonstration terms in the cost, \ours enables the certain agent to influence the uncertain agent’s belief and achieve lower task cost than \passive game (Figure~\ref{fig:furniture} (a),(b)). While the real furniture angle always converges faster to $\trueparam$ under \ours (plot (c)), we observe that when $\trueparam=0.3$, the uncertain agent’s \textit{belief} converges more slowly than in the baseline (top plot (a)). This stems from the initial condition: since the initial angle $\param_0=0.6$ is already close to $\trueparam=0.3$, the certain agent conserves effort by prioritizing task completion over correcting the uncertain agent’s belief. When the initial and desired angles differ greatly, however, it becomes worthwhile for the certain agent to guide belief alignment to improve task performance. These strategic behaviors emerge automatically from the dynamic programming solution to the active intent-teaching problem, demonstrating our method’s ability to reason about the trade-off between task performance and intent demonstration, supporting \textbf{H4}.}








\section{Discussion}\vspace{-0.3em}
\para{Conclusion} In this work, we studied intent demonstration in multi-agent general-sum games, a problem commonly encountered in game-theoretic control applications such as autonomous driving, multi-robot manipulation, shared control systems, and human-robot interactions. 
Theoretically, we proved a sufficient condition for the convergence of an uncertain agent's beliefs to the ground truth certain agent's intent. Additionally, we showed that the certain agent could achieve a higher task performance by strategically demonstrating its intent to the uncertain agents. 
Algorithmically, we proposed an efficient method to solve linear and nonlinear intent demonstration problems via iterative linear-quadratic approximations.
Our empirical results show that intent demonstration accelerates the learning of uncertain agents, reduces task regret for players, and enables the certain agent to balance task performance with intent expression. \vspace{-0.3em}

\para{Limitations and Future Work} Our framework assumes a shared initial estimate and known estimate dynamics of uncertain agents. While reasonable in some contexts (e.g., strong priors at a four-way intersection), these assumptions could be relaxed in future work by first inferring the uncertain agents’ estimate dynamics and then computing optimal intent demonstration policies.\vspace{-0.3em}

\section*{Appendix}
\label{app}
\begin{proofp}
\revise{
        We approach the proof by showing that there exists a teaching policy under which the belief $\est_t^j$, where $j\in\{2,\dots,N\}$, converges to the ground truth parameter exponentially fast. Substituting $u_t^1 = 
        \pi_t^1(x_t;\param)$ into player $j$'s estimate dynamics, we have \vspace{-0.3em}
    \begin{equation}
    \begin{aligned}
        \est_{t+1}^j & = \est_t^j - \stepsize \nabla_{\est_t^j} \|u_t^1 - \pi_t^1(x_t;\est_t^j)\|_2^2
        \\&= \est_t^j + \stepsize K_{t,\param}^{1\top}K_{t,\param}^1 (\param - \est_t^j).
    \end{aligned}\vspace{-0.3em}
    \end{equation}
    Subtracting $\param$ from both sides, we have $\est_{t+1}^j - \theta= (I-\stepsize K_{t,\param}^{1\top}K_{t,\param}^1)(\est_t^j - \param) $. 
    Since 
    the largest singular value of $(I-\alpha K_{t,\param}^{1\top}K_{t,\param}^1)$, $\forall t\le T$, denoted as $c$, is strict less than 1, we have $\|\est_{t+1}^j - \param\|_2 \le c \|\est_t^ j - \param \|_2$. 
    %
    Thus, there exists an active teaching policy, defined as $\bar{\pi}_t^1(x_t,\est_t;\theta) := \pi_t^1(x_t;\param)$, 
    that guarantees the exponential convergence of $\est_t^j$ towards $\trueparam$.}
\end{proofp}

\begin{proofpp}
{\revise{}  \ 
    First of all, we observe that $\{\tilde{u}_t^1\}_{t=0}^{T}$ and its resulted state trajectory $\{\tilde{x}_t\}_{t=0}^{T+1}$ is a feasible solution to \eqref{eq:active teaching problem}. Thus, the optimal solution \eqref{eq:active teaching problem} leads to a cost value not greater than player 1's cost in complete information game. Moreover, when the Jacobian of $\tilde{c}_{t:T}^1$ with respect to $\tilde{u}_{t:T}^1$ is nonzero, by convexity of the cost $c_t^1$ \cite[Section 4.2.3]{boyd2004convex}, for some $\epsilon>0$, there exists a solution $\check{u}_{t:T}^1 \in \{u_{t:T}^1: \|u_{t:T}^1 - \tilde{u}_{t:T}^i \|_2\le \epsilon\}$ such that player 1's control $\check{u}_{t:T}^1$ achieves a lower task cost value $\tilde{c}_{t:T}^1$ than under the control $\tilde{u}_{t:T}^1$. This completes the proof. }
\end{proofpp}

\bibliographystyle{ieeetr}
\bibliography{references}

\end{document}